\def\<{\langle}
\def\>{\rangle}
\newcommand{\Tr}{\mathrm{Tr}}
\def\oper{{\mathchoice{\rm 1\mskip-4mu l}{\rm 1\mskip-4mu l}
{\rm 1\mskip-4.5mu l}{\rm 1\mskip-5mu l}}}
\newtheorem{Theorem}{Theorem}
\newtheorem{Corollary}{Corollary}
\newtheorem{Remark}{Remark}
\newtheorem{Proposition}{Proposition}
\begin{document}

\title{\bf Regularized maximal fidelity of the generalized Pauli channels}
\author{Katarzyna Siudzi{\'n}ska}
\affiliation{ Institute of Physics, Faculty of Physics, Astronomy and Informatics \\  Nicolaus Copernicus University,
Grudzi\k{a}dzka 5/7, 87--100 Toru{\'n}, Poland}

\begin{abstract}
We consider the asymptotic regularization of the maximal fidelity for the generalized Pauli channels, which is a problem similar to the classical channel capacity. In particular, we find the formulas for the extremal channel fidelities and the maximal output $\infty$-norm. For wide classes of channels, we show that these quantities are multiplicative. Finally, we find the regularized maximal fidelity for the channels satisfying the time-local master eqeuations.
\end{abstract}

\maketitle

\section{Introduction}

The noisiness of a quantum channel is characterized by its capacity, which is the probability that the transmitted information does not get distorted. Finding the channel capacity is still an open problem, even in the simplest case of the classical capacity
\begin{equation}
C(\Lambda)=\lim_{n\to\infty}\frac 1n \chi(\Lambda^{\bigotimes n}),
\end{equation}
given in terms of the Holevo capacity $\chi$ \cite{Holevo,sw}. This is the case because calculating the Holevo capacity requires finding the maximal value of the entropic expression over ensembles of states and their probabilities of occurence. The problem of finding the classical capacity simplifies significantly for unitarily covariant quantum channels, for which $C(\Lambda)=\chi(\Lambda)$ as long as the minimal output entropy $S_{\min}(\Lambda)$ is additive \cite{Holevo2,mds}.

Recently, Ernst and Klesse \cite{Klesse} have proposed a toy problem that is {\it structurally similar but technically far less demanding} than finding the classical capacity. Namely, their goal is to determine the asymptotic regularization
\begin{equation}
f_{\max}^{(\infty)}(\Lambda)=\lim_{n\to\infty}\sqrt[n]{f_{\max}(\Lambda^{\bigotimes n})}
\end{equation}
of the maximal channel fidelity $f_{\max}$, which measures the distortion of states under the action of the channel. It turns out that there is a relation between $f_{\max}^{(\infty)}(\Lambda)$ and the maximal output $\infty$-norm $\nu_\infty(\Lambda)$, which is a measure of the optimal output purity. Moreover, if the maximal output $\infty$-norm is multiplicative, then $f_{\max}^{(\infty)}(\Lambda)=\nu_\infty(\Lambda)$.

Our goal is to analyze the regularized maximal fidelity for the generalized Pauli channels. For these channels, we find the exact formulas for the extremal values of the channel fidelity and the maximal output $\infty$-norm. We show that $\nu_\infty$ is indeed reached on the projectors onto the mutually unbiased bases, which was conjectured by Nathanson and Ruskai \cite{Ruskai}. Next, we prove the multiplicativity of $\nu_\infty$ for certain families of channels. Finally, we find the regularized maximal fidelity for a wide class of the generalized Pauli channels.

\section{Generalized Pauli channels}

Consider the $d$-dimensional Hilbert space $\mathcal{H}$ with the maximal number $N(d)=d+1$ of mutually unbiased bases $\{\psi_0^{(\alpha)},\dots,\psi_{d-1}^{(\alpha)}\}$ \cite{Wootters,MAX}. One defines the generalized Pauli channels $\Lambda:\mathcal{B}(\mathcal{H})\to\mathcal{B}(\mathcal{H})$ \cite{Ruskai,mub_final} by
\begin{equation}\label{GPC}
\Lambda=\frac{dp_0-1}{d-1}\oper+\frac{d}{d-1}\sum_{\alpha=1}^{d+1}p_\alpha\Phi_\alpha,
\end{equation}
where $p_\alpha$ denotes the probability distribution,
\begin{equation}
\Phi_\alpha[X]=\sum_{k=0}^{d-1}P_k^{(\alpha)}XP_k^{(\alpha)},
\end{equation}
and $P_k^{(\alpha)}:=|\psi_k^{(\alpha)}\>\<\psi_k^{(\alpha)}|$ is the rank-1 projector. For $d=2$, eq. (\ref{GPC}) reduces to the Pauli channel
\begin{equation}\label{PC}
\Lambda=\sum_{\alpha=0}^{3}p_\alpha\sigma_\alpha\rho\sigma_\alpha,
\end{equation}
where $\sigma_0=\mathbb{I},\ \sigma_1,\ \sigma_2,\ \sigma_3$ are the Pauli matrices.

The generalized Pauli channels satisfy the eigenvalue equations
\begin{equation}\label{GPC_eigenvalue_eq}
\Lambda[U_\alpha^k]=\lambda_\alpha U_\alpha^k,\qquad k=1,\ldots,d-1,
\end{equation}
and $\Lambda[\mathbb{I}]=\mathbb{I}$. Interestingly, the eigenvectors are the unitary operators constructed from the projectors onto the mutually unbiased bases,
\begin{equation}\label{U}
U_{\alpha}^k=\sum_{l=0}^{d-1}\omega^{kl}P_l^{(\alpha)},
\end{equation}
with $\omega = e^{2\pi i/d}$. 
The eigenvalues $\lambda_\alpha$ are $(d-1)$-times degenerate, and they are related to the probability distribution via
\begin{equation}\label{GPC_eigenvalues}
\lambda_\alpha=\frac{1}{d-1}\left[d(p_0+p_\alpha)-1\right].
\end{equation}
The inverse relation reads
\begin{equation}\label{c1}
p_0=\frac{1}{d^2}\left[1+(d-1)\sum_{\alpha=1}^{d+1}\lambda_\alpha\right],
\end{equation}
\begin{equation}\label{c2}
p_\alpha=\frac{d-1}{d^2}\left[1+d\lambda_\alpha-\sum_{\beta=1}^{d+1} \lambda_\beta\right].
\end{equation}
The necessary and sufficient conditions for the generalized Pauli channel to be a completely positive and trace-preserving map are the generalized Fujiwara-Algoet conditions \cite{Fujiwara, Ruskai, Zyczkowski}
\begin{equation}\label{Fuji-d}
-\frac{1}{d-1}\leq\sum_{\beta=1}^{d+1}\lambda_\beta\leq 1+d\min_{\beta}\lambda_\beta.
\end{equation}

The generalized Pauli channels were first considered by Nathanson and Ruskai \cite{Ruskai} as the {\it Pauli diagonal channels constant on axes}. Ohno and Petz \cite{Petz} analyzed even more general channels, of which the generalized Pauli channels are the special case with the commutative subalgebras $\{\mathbb{I},U_\alpha^k\ |\ k=1,\dots,d-1\}$. Their applications range between the quantum process tomography \cite{QPT}, optimal parameter estimation \cite{Ruppert}, and geometrical quantum mechanics \cite{geom}. In the theory of open quantum systems and non-Markovian dynamics, the channels and their evolution were analyzed in both the time-local evolution \cite{mub_final,ICQC} and the memory kernel approach \cite{memory_final}. In the present paper, we focus on the other properties, like state distortion and purity.

\section{Channel fidelity}

The fidelity is the measure of distance between two quantum states \cite{Nielsen, Zyczkowski}. It helps us to determine how distinguished the states are from one another. Uhlmann \cite{Uhlmann} defined the fidelity between the density operators $\rho$ and $\sigma$ by
\begin{equation}\label{statesfidelity}
F(\rho,\sigma):=\left(\Tr\sqrt{\sqrt{\rho}\sigma\sqrt{\rho}}\right)^2,
\end{equation}
where $0\leq F(\rho,\sigma)\leq 1$ and $F(\rho,\sigma)=1$ if and only if $\rho=\sigma$. On the basis of this simple formula, many other types of fidelity were derived, like the entanglement fidelity \cite{schumacher96}, average fidelity \cite{nielsen02}, or regularized maximum pure state input-output fidelity \cite{Klesse}. The channel-state duality \cite{jam72} allows one to introduce the notion of channel fidelity $F(\rho,\Lambda[\rho])$, which measures the fidelity between the input $\rho$ and output $\Lambda[\rho]$ states \cite{raginsky01}. Because of its convexity, the extremal values of $F(\rho,\Lambda[\rho])$ are reached at pure states represented by rank-1 projectors $P$. Hence, one defines the minimal and maximal channel fidelity \cite{Zycz} by
\begin{equation}\label{channelfidelity}
\begin{split}
f_{\min}(\Lambda)=\min_PF(P,\Lambda[P])=\min_P \Tr(P\Lambda[P]),\\
f_{\max}(\Lambda)=\max_PF(P,\Lambda[P])=\max_P \Tr(P\Lambda[P]).
\end{split}
\end{equation}
They measure how much a given quantum channel can distort an initial state. The more $\Lambda$ resembles the identity map $\oper$, the less $\rho$ changes under a single action of the channel.

\begin{Theorem}\label{THM}
For the generalized Pauli channel $\Lambda$ defined by eq. (\ref{GPC}), the minimal and maximal channel fidelities are equal to
\begin{align}
&f_{\min}(\Lambda)=\frac{1}{d}\left[1+(d-1)\lambda_{\min}\right],\label{A1}\\
&f_{\max}(\Lambda)=\frac{1}{d}\left[1+(d-1)\lambda_{\max}\right],\label{A2}
\end{align}
where $\lambda_{\max}=\max_\alpha\lambda_\alpha$ and $\lambda_{\min}=\min_\alpha\lambda_\alpha$.
\end{Theorem}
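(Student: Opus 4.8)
The plan is to compute $\Tr(P\Lambda[P])$ explicitly for an arbitrary rank-1 projector $P$ and then optimize over $P$. The key idea is to expand everything in terms of the eigenbasis of the channel. First I would write $P$ in the operator basis consisting of the identity together with the unitaries $U_\alpha^k$ from eq.~(\ref{U}), since these are the eigenvectors of $\Lambda$ by eq.~(\ref{GPC_eigenvalue_eq}). Writing $P = \frac{1}{d}\mathbb{I} + \sum_{\alpha=1}^{d+1}\sum_{k=1}^{d-1} c_\alpha^k U_\alpha^k$ with suitable coefficients, the action of $\Lambda$ is diagonal: $\Lambda[P] = \frac{1}{d}\mathbb{I} + \sum_{\alpha,k}\lambda_\alpha c_\alpha^k U_\alpha^k$, because $\Lambda[\mathbb{I}]=\mathbb{I}$ and $\Lambda[U_\alpha^k]=\lambda_\alpha U_\alpha^k$.

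Next I would exploit orthogonality of this operator basis under the Hilbert--Schmidt inner product, $\Tr\bigl((U_\alpha^k)^\dagger U_\beta^l\bigr) = d\,\delta_{\alpha\beta}\delta_{kl}$, to collapse the trace. This should give
\begin{equation}
\Tr(P\Lambda[P]) = \frac{1}{d} + \sum_{\alpha=1}^{d+1}\lambda_\alpha \sum_{k=1}^{d-1} d\,|c_\alpha^k|^2 .
\end{equation}
The crucial structural fact I would then establish is that the total ``weight'' $\sum_{\alpha,k} d\,|c_\alpha^k|^2$ is fixed by the constraint $\Tr P^2 = \Tr P = 1$, which forces $\sum_{\alpha,k} d\,|c_\alpha^k|^2 = \frac{d-1}{d}$. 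Thus $\Tr(P\Lambda[P])$ becomes a convex combination of the $\lambda_\alpha$ with weights $w_\alpha = \sum_k d\,|c_\alpha^k|^2$ summing to $\frac{d-1}{d}$, plus the constant $\frac{1}{d}$.

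Given this reduction, the optimization is immediate in principle: $\Tr(P\Lambda[P]) = \frac{1}{d} + \sum_\alpha w_\alpha \lambda_\alpha$ is maximized by concentrating all the weight $\frac{d-1}{d}$ on the largest eigenvalue and minimized by concentrating it on the smallest, yielding $f_{\max} = \frac{1}{d}[1+(d-1)\lambda_{\max}]$ and $f_{\min} = \frac{1}{d}[1+(d-1)\lambda_{\min}]$, matching eqs.~(\ref{A2}) and~(\ref{A1}). The main obstacle is the final feasibility step: I must verify that the extremal weight distribution is actually \emph{attainable} by a genuine rank-1 projector. That is, for a fixed axis $\alpha^*$, there must exist a pure state $P$ whose operator-basis expansion has all its off-diagonal weight on the single block $\{U_{\alpha^*}^k\}_k$ with total weight $\frac{d-1}{d}$. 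I expect this to be realized precisely by the projectors $P_l^{(\alpha^*)}$ onto the mutually unbiased basis vectors: one checks directly that $P_l^{(\alpha)} = \frac{1}{d}\mathbb{I} + \frac{1}{d}\sum_{k=1}^{d-1}\omega^{-kl}U_\alpha^k$, so that $|c_\alpha^k|^2 = 1/d^2$ for exactly the chosen $\alpha$ and vanishes for all other axes, giving weight exactly $\frac{d-1}{d}$ on the single eigenvalue $\lambda_\alpha$. Confirming this expansion and that it saturates the bound is the technical heart of the argument; the convexity bound itself is then routine.
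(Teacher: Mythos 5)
Your proposal is correct and follows essentially the same route as the paper's own proof: expand $P$ in the unitary eigenbasis $\{\mathbb{I},U_\alpha^k\}$, use the diagonal action of $\Lambda$ and the purity constraint $\Tr P^2=1$ to reduce $\Tr(P\Lambda[P])$ to a weighted average of the $\lambda_\alpha$ with fixed total weight, then optimize and verify attainability at the mutually unbiased basis projectors $P_l^{(\alpha)}$. The only differences are cosmetic (your coefficients $c_\alpha^k$ are the paper's $x_{\alpha k}/d$, and you state the Hilbert--Schmidt orthogonality $\Tr\bigl((U_\alpha^k)^\dagger U_\beta^l\bigr)=d\,\delta_{\alpha\beta}\delta_{kl}$ explicitly where the paper uses it implicitly).
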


\begin{proof}
To calculate the channel fidelity, we need to know how $\Lambda$ transforms pure initial states. Any rank-1 projector $P$ can be written in the unitary basis $\{\mathbb{I},U_\alpha^k\}$ introduced in eq. (\ref{U}). Namely,
\begin{equation}\label{P}
P=\frac 1d \left(\mathbb{I}+\sum_{\alpha=1}^{d+1}\sum_{k=1}^{d-1}x_{\alpha k}U_\alpha^k\right),
\end{equation}
where $x_{\alpha k}$ are complex parameters. Now, we find that
\begin{equation}\label{prod}
\Lambda[P]=\frac 1d \left[\mathbb{I}+\sum_{\alpha=1}^{d+1}\sum_{k=1}^{d-1}
\lambda_\alpha x_{\alpha k}U_\alpha^k\right],
\end{equation}
which allows us to obtain the channel fidelity for the generalized Pauli channels,
\begin{equation}\label{sch}
F(P,\Lambda[P])=\Tr(P\Lambda[P])=\frac{1}{d}\left(1+\sum_{\alpha=1}^{d+1}\lambda_\alpha
\sum_{k=1}^{d-1}|x_{\alpha k}|^2\right).
\end{equation}
Recall that $P$ is a rank-1 projector, and hence
\begin{equation}
\Tr P^2=\frac{1}{d}\left(1+\sum_{\alpha=1}^{d+1}\sum_{k=1}^{d-1}
|x_{\alpha k}|^2\right)=1.
\end{equation}
The above condition is equivalent to
\begin{equation}
\sum_{\alpha=1}^{d+1}\sum_{k=1}^{d-1}|x_{\alpha k}|^2=d-1.
\end{equation}
Therefore, $F(P,\Lambda[P])$ reaches its minimal and maximal values if the only non-vanishing coefficients are $x_{\alpha_\ast k}$ and $x_{\alpha_\# k}$, respectively, where $\lambda_{\alpha_\ast}=\lambda_{\min}$ and $\lambda_{\alpha_\#}=\lambda_{\max}$. The minimal and maximal channel fidelities are reached at the projectors onto the mutually unbiased bases,
\begin{align}
&f_{\min}(\Lambda)=F(P_k^{(\alpha_\ast)},\Lambda[P_k^{(\alpha_\ast)}]),\\
&f_{\max}(\Lambda)=F(P_k^{(\alpha_\#)},\Lambda[P_k^{(\alpha_\#)}]),
\end{align}
where, from eq. (\ref{U}),
\begin{equation}
P_k^{(\alpha)}=\frac 1d 
\left(\mathbb{I}+\sum_{l=1}^{d-1}\omega^{-kl}U_\alpha^l\right).
\end{equation}
\end{proof}

\begin{Remark}
The minimal and maximal channel fidelities from Theorem \ref{THM} can be equivalently written in terms of the probability distribution $p_\alpha$ as
\begin{align}
&f_{\min}(\Lambda)=p_0+\min_{\alpha>0} p_\alpha,\\
&f_{\max}(\Lambda)=p_0+\max_{\alpha>0} p_\alpha.
\end{align}
\end{Remark}

\section{Maximal output $p$-norm}

Quantum channels $\Lambda$ transform input states $\rho$ into output states $\Lambda[\rho]$. While it is relatively easy to control the input, the attainable output states depend on the channel's properties. In particular, it is not always possible to find such $\rho$ for which $\Lambda[\rho]$ is pure. In these cases, one can ask how close the outputs are to pure states. This is measured by optimal output purity measures. One such measure is the maximal output $p$-norm defined as follows,
\begin{equation}
\nu_p(\Lambda):=\max_P||\Lambda[P]||_p,
\end{equation}
where the Schlatten $p$-norm reads
\begin{align}
&||\Lambda[P]||_p:=(\Tr\Lambda[P]^p)^{1/p},\qquad 1\leq p<\infty,\\
&||\Lambda[P]||_\infty:=\max_Q\Tr(Q\Lambda[P]),
\end{align}
and $Q$ is a rank-1 projector. For product channels, it is known that
\begin{equation}
\nu_p(\Lambda\otimes\Phi)\geq\nu_p(\Lambda)\nu_p(\Phi).
\end{equation}
The maximal output $p$-norm is multiplicative if
\begin{equation}\label{mul}
\nu_p(\Lambda\otimes\Phi)=\nu_p(\Lambda)\nu_p(\Phi).
\end{equation}
Fukuda \cite{Fukuda} proved that if eq. (\ref{mul}) is satisfied for all $\Phi:\mathbb{I}_{d_1}/d_1\longmapsto\mathbb{I}_{d_2}/d_2$, then it holds for any $\Phi$.

Nathanson and Ruskai \cite{Ruskai} derived the exact formula for the maximal output $2$-norm of the generalized Pauli channel,
\begin{equation}
\nu_2(\Lambda)=\sqrt{\frac 1d \left[1+(d-1)\max_\alpha\lambda_\alpha^2\right]}.
\end{equation}
They also conjectured that the maximal output $p$-norm is achieved on the projectors onto the mutually unbiased bases $P_k^{(\alpha)}$ and proved it in two special cases: $p=2$ and $p=\infty$. Indeed, the maximal value of $||\Lambda[P]||_2$ is reached at $P_k^{(\alpha_0)}$, where $\alpha_0$ numbers the eigenvalue whose module is maximal. Moreover, the maximal output $2$-norm is related to the maximal channel fidelity in the following way,
\begin{equation}\label{fmaxnu2}
f_{\max}(\Lambda^\dagger\Lambda)=\nu_2^2(\Lambda).
\end{equation}
Finally, Nathanson and Ruskai \cite{Ruskai} showed that $\nu_2(\Lambda\otimes\Phi)$ is multiplicative for the generalized Pauli channels $\Lambda$ (with arbitrary $\Phi$).

\begin{Remark}\label{RR}
The maximal output $2$-norm $\nu_2(\Lambda)$ and the maximal channel fidelity $f_{\max}(\Lambda)$ are attained at the same state $P_k^{(\alpha_{\#})}$ if and only if $|\lambda_{\max}|\geq|\lambda_{\min}|$, where $\lambda_{\alpha_{\#}}=\max_\alpha \lambda_\alpha$.
\end{Remark}

\begin{Remark}\label{RR2}
The maximal output $2$-norm $\nu_2(\Lambda)$ and the minimal channel fidelity $f_{\min}(\Lambda)$ are reached at the same state $P_k^{(\alpha_{\ast})}$ if and only if $\lambda_{\max}^2\leq\lambda_{\min}^2$, where $\lambda_{\alpha_{\ast}}=\min_\alpha \lambda_\alpha$.
\end{Remark}

From the multiplicativity of $\nu_2(\Lambda)$, it follows that the extremal channel fidelities can be multiplicative, as well.

\begin{Proposition}\label{PP}
The minimal and maximal channel fidelity for the generalized Pauli channels are multiplicative in the sense that
\begin{enumerate}[label={(\roman*)}]
\item if $|\lambda_{\max}|\leq|\lambda_{\min}|$, then
\begin{equation}
f_{\min}(\Lambda\otimes\Lambda)=f_{\min}^2(\Lambda);
\end{equation}
\item if $|\lambda_{\max}|\geq|\lambda_{\min}|$, then
\begin{equation}
f_{\max}(\Lambda\otimes\Lambda)=f_{\max}^2(\Lambda).
\end{equation}
\end{enumerate}
\end{Proposition}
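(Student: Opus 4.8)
The plan is to prove each case as a matching pair of inequalities: an easy bound obtained by restricting $\Lambda\otimes\Lambda$ to product inputs, together with the opposite bound over all inputs (including entangled ones), which I would reduce to the already-established multiplicativity of $\nu_2$ for the generalized Pauli channels. For (ii) the single-copy optimum in eq. (\ref{A2}) is attained at the mutually unbiased projector $P_k^{(\alpha_\#)}$, so feeding $\Lambda\otimes\Lambda$ the product state $P_k^{(\alpha_\#)}\otimes P_k^{(\alpha_\#)}$ and using $(\Lambda\otimes\Lambda)[P\otimes P]=\Lambda[P]\otimes\Lambda[P]$ makes the trace factorize, giving $f_{\max}(\Lambda\otimes\Lambda)\ge f_{\max}^2(\Lambda)$. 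For (i) the same product test states give $f_{\min}(\Lambda\otimes\Lambda)\le f_{\min}^2(\Lambda)$, since each factor $\Tr(P_i\Lambda[P_i])$ lies in $[f_{\min}(\Lambda),f_{\max}(\Lambda)]\subseteq[0,1]$ and a product of nonnegative numbers is minimized by minimizing each. These easy directions hold for arbitrary generalized Pauli channels and carry no sign hypothesis.

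The genuine content is the reverse inequality, $f_{\max}(\Lambda\otimes\Lambda)\le f_{\max}^2(\Lambda)$ in case (ii) and $f_{\min}(\Lambda\otimes\Lambda)\ge f_{\min}^2(\Lambda)$ in case (i), which I would obtain from eq. (\ref{fmaxnu2}) and eq. (\ref{mul}). Concretely, in case (ii) the hypothesis $|\lambda_{\max}|\ge|\lambda_{\min}|$ forces $\lambda_{\max}\ge0$ and $\max_\alpha\lambda_\alpha^2=\lambda_{\max}^2$. I would introduce the generalized Pauli channel $\Gamma$ with eigenvalues $\sqrt{\max(\lambda_\alpha,0)}$, so that $\Gamma=\Gamma^\dagger$ and $\Gamma^2$ is the generalized Pauli channel with eigenvalues $\max(\lambda_\alpha,0)$. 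By Theorem \ref{THM} and eq. (\ref{fmaxnu2}), $\nu_2^2(\Gamma)=f_{\max}(\Gamma^2)=\tfrac1d[1+(d-1)\lambda_{\max}]=f_{\max}(\Lambda)$. Applying the multiplicativity of $\nu_2$ to $\Gamma\otimes\Gamma$ and then eq. (\ref{fmaxnu2}) again yields $f_{\max}(\Gamma^2\otimes\Gamma^2)=\nu_2^2(\Gamma\otimes\Gamma)=\nu_2^4(\Gamma)=f_{\max}^2(\Lambda)$, so the multiplicativity of $f_{\max}$ would follow once I identify $f_{\max}(\Lambda\otimes\Lambda)$ with $f_{\max}(\Gamma^2\otimes\Gamma^2)$.

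That last identification is where I expect the main obstacle to sit. One cannot simply bound $f_{\max}(\Lambda\otimes\Lambda)$ by the largest superoperator eigenvalue of $\Lambda\otimes\Lambda$ on the traceless subspace: the top eigendirections $\tfrac{\mathbb I}{\sqrt d}\otimes U_{\alpha_\#}^k$ are not the traceless part of any genuine rank-1 projector on $\mathcal H\otimes\mathcal H$, so that bound is strictly loose (it even exceeds $1$). The Cauchy–Schwarz bound $\Tr(P(\Lambda\otimes\Lambda)[P])\le\nu_2(\Lambda\otimes\Lambda)=\nu_2^2(\Lambda)$ is likewise too weak, since $\nu_2^2(\Lambda)\ge f_{\max}^2(\Lambda)$ with equality only when $\lambda_{\max}\in\{0,1\}$. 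Thus one is forced to use the full multiplicativity of $\nu_2$, which already encodes the constrained optimization over entangled inputs. The delicate point is that passing from $\Lambda$ to $\Gamma^2$ zeroes the negative eigenvalues, and in the tensor channel a pair of negative axes produces a genuine positive eigenvalue $\lambda_\alpha\lambda_\beta$ that $\Gamma^2\otimes\Gamma^2$ lacks; I would have to check that such directions cannot raise the maximal output fidelity, which holds precisely because $\lambda_\alpha\lambda_\beta\le\lambda_{\min}^2\le\lambda_{\max}^2\le\lambda_{\max}$ under the case hypothesis, so they are dominated by the $\lambda_{\max}$-direction that already realizes $f_{\max}^2(\Lambda)$. (Along the way I would also confirm that $\Gamma$, or at least the identities I use for it, remain valid; $\nu_2$ and eq. (\ref{fmaxnu2}) extend to generalized Pauli maps whose eigenvalues obey eq. (\ref{Fuji-d}).)

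Case (i) I would treat symmetrically, now with $\lambda_{\min}\le 0$ as the eigenvalue of largest modulus and with the role of $\nu_2$ played through Remark \ref{RR2}, which guarantees that $\nu_2(\Lambda)$ and $f_{\min}(\Lambda)$ are attained at the common projector $P_k^{(\alpha_\ast)}$ exactly when $\lambda_{\min}^2\ge\lambda_{\max}^2$. The product-state argument already gives $f_{\min}(\Lambda\otimes\Lambda)\le f_{\min}^2(\Lambda)$, and the reverse inequality — that no entangled input drives the output fidelity below $f_{\min}^2(\Lambda)$ — is again reduced to the multiplicativity of $\nu_2$, with the same bookkeeping of signs. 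The hypotheses $|\lambda_{\max}|\gtrless|\lambda_{\min}|$ in the two cases are exactly what ensures that the extremal fidelity in question is governed by the eigenvalue of largest modulus, the same eigenvalue that controls $\nu_2$, so that the multiplicativity of $\nu_2$ transfers to the corresponding extremal channel fidelity.
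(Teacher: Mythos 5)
Your skeleton is in fact the paper's own: its entire proof is the one-line chain $f_{\max}(\Lambda^\dagger\Lambda)=\nu_2^2(\Lambda)=\nu_2(\Lambda\otimes\Lambda)=\sqrt{f_{\max}(\Lambda^\dagger\Lambda\otimes\Lambda^\dagger\Lambda)}$, i.e.\ eq.~(\ref{fmaxnu2}) combined with the Nathanson--Ruskai multiplicativity (\ref{mul}), and, read literally, it too only establishes multiplicativity for channels of the form $\Lambda^\dagger\Lambda$ (eigenvalues $\lambda_\alpha^2\geq 0$), leaving the passage to an arbitrary $\Lambda$ with $|\lambda_{\max}|\geq|\lambda_{\min}|$ implicit. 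You correctly isolated that passage as the real content, but your bridge has two genuine gaps. First, your $\Gamma$ with eigenvalues $\mu_\alpha=\sqrt{\max(\lambda_\alpha,0)}$ need not be a channel, because the Fujiwara--Algoet conditions (\ref{Fuji-d}) are not stable under taking square roots: for $d=2$ the Pauli channel with $(\lambda_1,\lambda_2,\lambda_3)=(\tfrac12,\tfrac12,0)$ is completely positive, whereas $(\mu_1,\mu_2,\mu_3)=(\tfrac{1}{\sqrt{2}},\tfrac{1}{\sqrt{2}},0)$ gives $\sum_\beta\mu_\beta=\sqrt{2}>1=1+d\min_\beta\mu_\beta$. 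Hence the multiplicativity theorem for $\nu_2$, proved for channels, cannot be invoked for $\Gamma\otimes\Gamma$, and the extension you defer to in your parenthetical remark is exactly what is unavailable.

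Second, the identification $f_{\max}(\Lambda\otimes\Lambda)=f_{\max}(\Gamma^2\otimes\Gamma^2)$ cannot rest on the domination $\lambda_\alpha\lambda_\beta\leq\lambda_{\max}$: as you yourself observe earlier in the same paragraph, pointwise eigenvalue bounds discard the constraint that the optimizer be a genuine projector and are strictly loose, and that objection applies verbatim to the doubly-negative directions. In fact the identification --- and the proposition as literally stated --- fails in a case your hypothesis admits: for $d=2$ and $\lambda_1=\lambda_2=\lambda_3=-\tfrac13$ (the channel $\rho\mapsto\tfrac13\sum_{i=1}^{3}\sigma_i\rho\sigma_i$, which saturates the left inequality in (\ref{Fuji-d})) one has $|\lambda_{\max}|=|\lambda_{\min}|$ and $f_{\max}^2(\Lambda)=\tfrac19$, yet the maximally entangled input $P$ already gives $\Tr\bigl(P(\Lambda\otimes\Lambda)[P]\bigr)=\tfrac14\bigl(1+3\cdot\tfrac19\bigr)=\tfrac13$. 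This also refutes your claim that $|\lambda_{\max}|\geq|\lambda_{\min}|$ forces $\lambda_{\max}\geq 0$: it fails precisely when all eigenvalues are equal and negative. Consequently no bookkeeping of signs can close your gap from the stated hypothesis alone; at minimum one must additionally assume $\lambda_{\max}\geq 0$, as holds automatically for the time-local dynamical maps to which the paper later applies the result. Your easy directions (product test states giving $f_{\max}(\Lambda\otimes\Lambda)\geq f_{\max}^2(\Lambda)$ and $f_{\min}(\Lambda\otimes\Lambda)\leq f_{\min}^2(\Lambda)$) are fine, as is your diagnosis that Cauchy--Schwarz via $\nu_2(\Lambda\otimes\Lambda)$ is too weak --- indeed your analysis probes deeper than the paper's own proof, but the repair you sketch does not go through.
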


\begin{proof}
The above equalities follow directly from eqs. (\ref{fmaxnu2}) and (\ref{mul}). For {\it (i)} with $|\lambda_{\max}|\geq|\lambda_{\min}|$, one has
\begin{equation}
\begin{split}
f_{\max}(\Lambda^\dagger\Lambda)&=\nu_2^2(\Lambda)=\nu_2(\Lambda\otimes\Lambda)\\&
=\sqrt{f_{\max}(\Lambda^\dagger\Lambda\otimes\Lambda^\dagger\Lambda)}.
\end{split}
\end{equation}
For {\it (2)}, the proof is analogical.
\end{proof}

Now, we derive the formula for the maximal output $\infty$-norm.

\begin{Proposition}
For the generalized Pauli channel $\Lambda$, the maximal output $\infty$-norm is given by
\begin{equation}\label{infnorm}
\begin{split}
\nu_\infty(\Lambda)&=\max_{P,Q}\Tr(Q\Lambda[P])\\&=\frac 1d
\max\left\{1+(d-1)\lambda_{\max},1-\lambda_{\min}\right\},
\end{split}
\end{equation}
where $\lambda_{\max}=\max_\alpha\lambda_\alpha$ and $\lambda_{\min}=\min_\alpha\lambda_\alpha$.
\end{Proposition}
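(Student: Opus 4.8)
The plan is to use the variational form of the $\infty$-norm together with the particularly simple action of $\Lambda$ on the MUB projectors. Since $\Lambda[P]$ is a density operator, $\max_Q\Tr(Q\Lambda[P])$ is its largest eigenvalue, so $\nu_\infty(\Lambda)=\max_P\mu_{\max}(\Lambda[P])$. First I would record the image of an MUB projector: writing $P_l^{(\alpha)}=\frac1d(\mathbb{I}+\sum_{k=1}^{d-1}\omega^{-kl}U_\alpha^k)$, using the eigenvalue equation (\ref{GPC_eigenvalue_eq}) and the resolution $\sum_{k=1}^{d-1}\omega^{-kl}U_\alpha^k=dP_l^{(\alpha)}-\mathbb{I}$, one finds
\[
\Lambda[P_l^{(\alpha)}]=\frac{1-\lambda_\alpha}{d}\,\mathbb{I}+\lambda_\alpha P_l^{(\alpha)} .
\]
This operator has eigenvalue $\frac1d[1+(d-1)\lambda_\alpha]$ once and $\frac{1-\lambda_\alpha}{d}$ with multiplicity $d-1$, so $\mu_{\max}(\Lambda[P_l^{(\alpha)}])=\frac1d\max\{1+(d-1)\lambda_\alpha,\,1-\lambda_\alpha\}$, attained at $Q=P_l^{(\alpha)}$ when $\lambda_\alpha\ge0$ and at any $Q$ orthogonal to it when $\lambda_\alpha<0$. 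Maximising over $\alpha$, and noting that $1+(d-1)\lambda_\alpha$ increases while $1-\lambda_\alpha$ decreases in $\lambda_\alpha$, gives the right-hand side of (\ref{infnorm}) as a lower bound on $\nu_\infty(\Lambda)$.

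For the matching upper bound I would show no state can do better. Decomposing $\Lambda=\frac{dp_0-1}{d-1}\oper+\frac{d}{d-1}\sum_\alpha p_\alpha\Phi_\alpha$ and using that $\Phi_\alpha[P]=\sum_m|\langle\psi_m^{(\alpha)}|p\rangle|^2P_m^{(\alpha)}$ is diagonal in the $\alpha$-th basis, a direct computation yields
\[
\Tr(Q\Lambda[P])=\frac1d\Big[1+\sum_{\alpha=1}^{d+1}\lambda_\alpha\,(d\,c_\alpha-1)\Big],\quad c_\alpha=\sum_{m=0}^{d-1}|\langle\psi_m^{(\alpha)}|p\rangle|^2\,|\langle\psi_m^{(\alpha)}|q\rangle|^2 ,
\]
where each classical overlap $c_\alpha\in[0,1]$. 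The MUB structure enters through $\sum_{\alpha=1}^{d+1}\Phi_\alpha[P]=\mathbb{I}+P$, which gives both $\sum_\alpha c_\alpha=1+|\langle q|p\rangle|^2\in[1,2]$ and the purity identity $\sum_{\alpha=1}^{d+1}\sum_m|\langle\psi_m^{(\alpha)}|p\rangle|^4=2$ (and likewise for $q$). It then remains to maximise $\sum_\alpha\lambda_\alpha(dc_\alpha-1)$ over the overlaps subject to these constraints and to check that the optimum is $\max\{(d-1)\lambda_{\max},-\lambda_{\min}\}$, reached exactly at the two MUB configurations found above.

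The hard part is this last optimisation. The naive relaxation keeping only $c_\alpha\in[0,1]$ and $\sum_\alpha c_\alpha\le2$ is too loose: it would allow one basis at $c=1$ alongside several bases at $c=0$ and so predict the strictly larger value $\frac1d[1-(d-1)\lambda_{\min}]$ whenever $|\lambda_{\min}|>\lambda_{\max}$. One must therefore use the purity identity, which couples the bases and prevents concentrating one distribution while simultaneously driving others to disjoint supports; for instance $c_\alpha=1$ forces $p=q$ and hence $c_\beta=1/d$ in every other basis. Since the extremality of the MUB projectors for $p=\infty$ is precisely what Nathanson and Ruskai \cite{Ruskai} established, one may instead invoke it directly: granting that the maximum in $\nu_\infty(\Lambda)=\max_P\mu_{\max}(\Lambda[P])$ is attained on some $P_l^{(\alpha)}$, the first paragraph immediately delivers (\ref{infnorm}).
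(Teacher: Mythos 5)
Your proposal is correct, and it takes a genuinely different route from the paper's own proof. For the lower bound you diagonalize $\Lambda[P_l^{(\alpha)}]=\frac{1-\lambda_\alpha}{d}\,\mathbb{I}+\lambda_\alpha P_l^{(\alpha)}$ explicitly and maximize over $\alpha$, which is cleaner than what the paper does: the paper parametrizes both $P$ and $Q$ in the unitary basis (\ref{P}), arrives at (\ref{wynik}) (equivalent to your overlap formula via $\sum_{k\geq 1}x_{\alpha k}\overline{y}_{\alpha k}=d\,c_\alpha-1$, by Parseval), and then simply \emph{asserts} that the maximum is attained with all coefficients concentrated in a single basis $\alpha_\ast$, after which the bound (\ref{warunek}) is applied within that one basis. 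That concentration assertion is precisely the hard step you isolate in your third paragraph, and your diagnosis is sound: the relaxation retaining only the linear constraints on the $c_\alpha$ is strictly too loose (your specific quoted value is slightly imprecise --- with one basis at $c=1$ and all negative-eigenvalue bases at $c=0$ the relaxed bound is $\frac 1d\bigl[1+(d-1)\lambda_{\max}+\sum_{\lambda_\alpha<0}|\lambda_\alpha|\bigr]$, which generically differs from $\frac 1d[1-(d-1)\lambda_{\min}]$, but the qualitative point that it can strictly exceed (\ref{infnorm}) is right). Closing the gap by invoking Nathanson and Ruskai's proven $p=\infty$ extremality of the MUB projectors \cite{Ruskai} is legitimate --- indeed the paper records exactly this result before the proposition, and its own unproven concentration claim implicitly leans on the same fact (the paper even appeals to ``the results for $2$-norms'' to identify the optimizers). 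What your version buys is explicitness: it makes visible that the upper bound in (\ref{infnorm}) rests on \cite{Ruskai} rather than on the stated optimization, which in the paper's form is not self-contained. Both arguments identify the same optimal configurations: $Q=P=P_k^{(\alpha_\#)}$ on the $1+(d-1)\lambda_{\max}$ branch, and orthogonal projectors $P_k^{(\alpha_\ast)},P_m^{(\alpha_\ast)}$, $k\neq m$, from the basis attaining $\lambda_{\min}$ on the $1-\lambda_{\min}$ branch.
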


\begin{proof}
To calculate $\Tr(Q\Lambda[P])$ for the generalized Pauli channel $\Lambda$, let us parametrize the projectors $P$, $Q$ by (\ref{P}) and
\begin{equation}
Q=\frac 1d \left(\mathbb{I}+\sum_{\alpha=1}^{d+1}\sum_{k=1}^{d-1}y_{\alpha k}U_\alpha^k\right).
\end{equation}
Note that the condition $0\leq\Tr PQ\leq 1$ is equivalent to
\begin{equation}\label{warunek}
-1\leq\sum_{\alpha=1}^{d+1}\sum_{k=1}^{d-1}
x_{\alpha k}\overline{y}_{\alpha k}\leq d-1.
\end{equation}
Moreover, from eq. (\ref{prod}), it follows that
\begin{equation}\label{wynik}
\Tr(Q\Lambda[P])=\frac{1}{d}\left[1+\sum_{\alpha=1}^{d+1}\sum_{k=1}^{d-1}
\lambda_\alpha x_{\alpha k}\overline{y}_{\alpha k}\right].
\end{equation}
The maximal value of the above quantity is reached when $x_{\alpha k}=y_{\alpha k}=0$ for $\alpha\neq\alpha_\ast$.
\begin{enumerate}[label={(\roman*)}]
\item If $\lambda_{\alpha_\ast}\geq 0$, then the maximal value
\begin{equation}
\max_{P,Q}\Tr(Q\Lambda[P])=\frac{1}{d}\left[1+\lambda_{\alpha_\ast}\sum_{k=1}^{d-1}
 x_{\alpha_\ast k}\overline{y}_{\alpha_\ast k}\right]
\end{equation}
of eq. (\ref{wynik}) is attained for the maximal bound of (\ref{warunek}),
\begin{equation}
\sum_{k=1}^{d-1}
\lambda_{\alpha_\ast} x_{\alpha_\ast k}\overline{y}_{\alpha_\ast k}=d-1,
\end{equation}
and $\lambda_{\alpha_\ast}=\lambda_{\max}$.
In this case,
\begin{equation}
\max_{P,Q}\Tr(Q\Lambda[P])=\frac{1}{d}\left[1+(d-1)\lambda_{\max}\right].
\end{equation}
From the results for $2$-norms, we know that the above formula corresponds to the choice
\begin{equation}
Q=P=P_k^{(\alpha_\ast)}=\frac 1d 
\left(\mathbb{I}+\sum_{l=1}^{d-1}\omega^{-kl}U_{\alpha_\ast}^k\right).
\end{equation}
\item If $\lambda_{\alpha_\ast}\leq 0$, then the maximal value
\begin{equation}
\max_{P,Q}\Tr(Q\Lambda[P])=\frac{1}{d}\left[1-|\lambda_{\alpha_\ast}|\sum_{k=1}^{d-1}
 x_{\alpha_\ast k}\overline{y}_{\alpha_\ast k}\right]
\end{equation}
of eq. (\ref{wynik}) is attained for the minimal bound of (\ref{warunek}),
\begin{equation}
\sum_{k=1}^{d-1}
\lambda_{\alpha_\ast} x_{\alpha_\ast k}\overline{y}_{\alpha_\ast k}=-1,
\end{equation}
and $\lambda_{\alpha_\ast}=\lambda_{\min}$.
In this case,
\begin{equation}
\max_{P,Q}\Tr(Q\Lambda[P])=\frac{1}{d}\left[1-\lambda_{\min}\right].
\end{equation}
\end{enumerate}
The projectors $P$ and $Q$ that maximize eq. (\ref{wynik}) are as follows,
\begin{align}
P=P_k^{(\alpha_\ast)}=\frac 1d 
\left(\mathbb{I}+\sum_{l=1}^{d-1}\omega^{-kl}U_{\alpha_\ast}^k\right),\\
Q=P_m^{(\alpha_\ast)}=\frac 1d 
\left(\mathbb{I}+\sum_{l=1}^{d-1}\omega^{-ml}U_{\alpha_\ast}^m\right),
\end{align}
where $k\neq m$.
\end{proof}

Let us compare our results with the analysis done for the Pauli channels in \cite{Klesse}. For $d=2$, eq. (\ref{infnorm}) produces
\begin{equation}
\nu_\infty(\Lambda)=\max_{P,Q}\Tr(Q\Lambda[P])=\frac 12\max\left\{1+\lambda_{\max},1-\lambda_{\min}\right\}.
\end{equation}
In terms of the probability distribution,
\begin{equation}
\nu_\infty(\Lambda)=\frac 12(1+\lambda_{\max})=p_0+p_{\max},
\end{equation}
provided that
$\lambda_{\max}+\lambda_{\min}\geq 0$. Now, $\Tr(Q\Lambda[P])$ reaches the maximal value for $P=Q=P_0^{(\alpha_\ast)}$ or $P=Q=P_1^{(\alpha_\ast)}$, where $\lambda_{\max}=\lambda_{\alpha_\ast}$. Note that
\begin{equation}
\lambda_{\max}+\lambda_{\min}=p_0-p_{\mathrm{mid}}\geq 0,
\end{equation}
where $p_{\min}\leq p_{\mathrm{mid}}\leq p_{\max}$ and $\{p_{\min},p_{\mathrm{mid}},p_{\max}\}=\{p_1,p_2,p_3\}$. On the other hand,
\begin{equation}
\nu_\infty(\Lambda)=\frac 12(1-\lambda_{\min})=p_{\mathrm{mid}}+p_{\max}
\end{equation}
only when $\lambda_{\max}+\lambda_{\min}\leq 0$, i.e. $p_0-p_{\mathrm{mid}}\leq 0$. This time, the maximum of $\Tr(Q\Lambda[P])$ is reached for $\{P=P_0^{(\alpha_\ast)},Q=P_1^{(\alpha_\ast)}\}$ or $\{P=P_1^{(\alpha_\ast)},Q=P_0^{(\alpha_\ast)}\}$, where $\lambda_{\min}=\lambda_{\alpha_\ast}$. These results coincide with the maximal output $\infty$-norms for the Pauli channels found in \cite{Klesse}.

\section{Asymptotic regularizations of the channel fidelity}

After Ernst and Klesse \cite{Klesse}, we introduce the $n$-th regularization of the extremal channel fidelities and the maximal output $p$-norm,
\begin{equation}
f_{\min}^{(n)}(\Lambda)=\sqrt[n]{f_{\min}(\Lambda^{\bigotimes n})},
\end{equation}
\begin{equation}
f_{\max}^{(n)}(\Lambda)=\sqrt[n]{f_{\max}(\Lambda^{\bigotimes n})},
\end{equation}
\begin{equation}
\nu_p^{(n)}(\Lambda)=\sqrt[n]{\nu_p(\Lambda^{\bigotimes n})}.
\end{equation}
In particular, one talks about asymptotic regularizations if $n=\infty$.
The authors consider the asymptotic regularization of the maximal channel fidelity as a {\it toy model} in the channel capacity problem. They prove the following relation,
\begin{equation}
f_{\max}(\Lambda)\leq\nu_\infty(\Lambda)\leq \nu_\infty^{(\infty)}(\Lambda)=f_{\max}^{(\infty)}(\Lambda).
\end{equation}

\begin{Corollary}\label{C1}
If $|\lambda_{\max}|\geq|\lambda_{\min}|$, then $f_{\max}^{(n)}(\Lambda)$ is multiplicative, and hence
\begin{equation}
f_{\max}^{(n)}(\Lambda)=f_{\max}(\Lambda),\qquad n=1,2,\ldots,\infty.
\end{equation}
In particular,
\begin{equation}\label{equal}
f_{\max}(\Lambda)=\nu_\infty(\Lambda)= \nu_\infty^{(\infty)}(\Lambda)=f_{\max}^{(\infty)}(\Lambda).
\end{equation}
\end{Corollary}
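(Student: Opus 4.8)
The plan is to turn the Ernst--Klesse chain $f_{\max}(\Lambda)\le\nu_\infty(\Lambda)\le\nu_\infty^{(\infty)}(\Lambda)=f_{\max}^{(\infty)}(\Lambda)$ into a string of equalities, and then to push the resulting identity down to every finite $n$. The first step is to evaluate $\nu_\infty(\Lambda)$ under the hypothesis. In the formula $\nu_\infty(\Lambda)=\tfrac1d\max\{1+(d-1)\lambda_{\max},\,1-\lambda_{\min}\}$ I would show that the first argument dominates, i.e. $(d-1)\lambda_{\max}+\lambda_{\min}\ge0$: this is obvious when $\lambda_{\min}\ge0$, and in the relevant regime $\lambda_{\max}\ge0$ the hypothesis $|\lambda_{\max}|\ge|\lambda_{\min}|$ gives $\lambda_{\max}\ge-\lambda_{\min}$, so $(d-1)\lambda_{\max}\ge\lambda_{\max}\ge-\lambda_{\min}$ for $d\ge2$. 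Hence $\nu_\infty(\Lambda)=\tfrac1d[1+(d-1)\lambda_{\max}]=f_{\max}(\Lambda)$ by Theorem \ref{THM}. Inserting this into the chain already yields $f_{\max}(\Lambda)=\nu_\infty(\Lambda)\le\nu_\infty^{(\infty)}(\Lambda)=f_{\max}^{(\infty)}(\Lambda)$, so all of (\ref{equal}) reduces to the reverse bound $f_{\max}^{(\infty)}(\Lambda)\le f_{\max}(\Lambda)$.

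I would obtain this reverse bound from the full multiplicativity $f_{\max}(\Lambda^{\otimes n})=f_{\max}(\Lambda)^n$. The lower bound is free: feeding the product state $P^{\otimes n}$ built from the optimal single-copy projector $P=P_k^{(\alpha_\#)}$ gives $\Tr(P^{\otimes n}\Lambda^{\otimes n}[P^{\otimes n}])=f_{\max}(\Lambda)^n$, so $f_{\max}(\Lambda^{\otimes n})\ge f_{\max}(\Lambda)^n$; the same superadditivity of $\log f_{\max}(\Lambda^{\otimes n})$ guarantees, via Fekete's lemma, that $f_{\max}^{(\infty)}(\Lambda)$ exists and equals $\sup_n f_{\max}^{(n)}(\Lambda)$.

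The matching upper bound $f_{\max}(\Lambda^{\otimes n})\le f_{\max}(\Lambda)^n$ is the main obstacle, because $\Lambda^{\otimes n}$ is no longer a generalized Pauli channel and Theorem \ref{THM} does not apply to it. Here I would show that the optimal output purity of $\Lambda^{\otimes n}$ is still attained on a product of mutually unbiased basis projectors, so that $\nu_\infty(\Lambda^{\otimes n})=\nu_\infty(\Lambda)^n$; together with $f_{\max}(\Lambda^{\otimes n})\le\nu_\infty(\Lambda^{\otimes n})$ and Step 1 this closes the estimate. For the subclass $\lambda_\alpha\ge0$ the argument is clean: the universal identity $f_{\max}(\Phi^\dagger\Phi)=\nu_2^2(\Phi)$ (both sides equal $\max_P\Tr\Phi[P]^2$), applied to $\Phi=\Lambda^{\otimes n}$ and combined with the Nathanson--Ruskai multiplicativity of $\nu_2$, reduces $f_{\max}((\Lambda^\dagger\Lambda)^{\otimes n})$ to $\nu_2(\Lambda)^{2n}$, extending the $n=2$ statement of Proposition \ref{PP} to all $n$. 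The genuinely indefinite case $\lambda_{\min}<0\le\lambda_{\max}$ is the technical heart: there I would argue directly in the eigenbasis $\{U_\alpha^k\}$ that no entangled input can beat the product MUB projectors, which is exactly the multiplicativity of $\nu_\infty$ for these channels.

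Once $f_{\max}(\Lambda^{\otimes n})=f_{\max}(\Lambda)^n$ is established, $f_{\max}^{(n)}(\Lambda)=f_{\max}(\Lambda)$ for every finite $n$, and passing to the limit gives $f_{\max}^{(\infty)}(\Lambda)=f_{\max}(\Lambda)$. Combined with Step 1 ($\nu_\infty(\Lambda)=f_{\max}(\Lambda)$) and the Ernst--Klesse identity $\nu_\infty^{(\infty)}=f_{\max}^{(\infty)}$, this collapses the chain to the full string of equalities (\ref{equal}).
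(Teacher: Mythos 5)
Your architecture (collapse the Ernst--Klesse chain, reduce everything to $f_{\max}(\Lambda^{\otimes n})=f_{\max}(\Lambda)^n$, with the lower bound free from product inputs and Fekete's lemma) is sound, and where your argument is actually complete it coincides with the paper's own mechanism: the paper obtains the corollary from Proposition \ref{PP}, whose proof is exactly your $\nu_2$ trick --- the identity $f_{\max}(\Phi^\dagger\Phi)=\nu_2^2(\Phi)$ combined with the Nathanson--Ruskai multiplicativity of $\nu_2$ --- and then gets the whole string (\ref{equal}), including $\nu_\infty(\Lambda)=f_{\max}(\Lambda)$, by squeezing the chain between the equal endpoints $f_{\max}$ and $f_{\max}^{(\infty)}$; your Step 1 computation of $\nu_\infty$ from (\ref{infnorm}) is thus an optional detour (consistent with Corollary \ref{C3}). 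The genuine gap is your Step 3, which is the entire content of the statement. For the indefinite case $\lambda_{\min}<0\le\lambda_{\max}$ you offer only that you ``would argue directly in the eigenbasis that no entangled input can beat the product MUB projectors'' --- but that sentence \emph{is} the multiplicativity claim; excluding entangled inputs is the hard part of every output-purity multiplicativity problem, and the single-copy parametrization behind Theorem \ref{THM} does not carry over, since $\Lambda^{\otimes n}$ is not a generalized Pauli channel and its eigenvectors $U_{\alpha_1}^{k_1}\otimes\cdots\otimes U_{\alpha_n}^{k_n}$ no longer come with the simple purity constraint that made the $n=1$ optimization trivial. Note also that the $\nu_2$ mechanism is structurally unavailable here: a channel with a negative eigenvalue is never of the form $\Phi^\dagger\Phi$. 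So for the indefinite case you have a placeholder, not a proof.

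Two further soft spots deserve naming. First, even your ``clean'' subcase proves multiplicativity only for channels of the form $\Phi^\dagger\Phi$; to cover an arbitrary generalized Pauli channel with $\lambda_\alpha\ge 0$ you must realize it as $\Phi^\dagger\Phi$ with $\Phi$ a bona fide (completely positive) generalized Pauli channel with eigenvalues $\sqrt{\lambda_\alpha}$, and the Fujiwara--Algoet conditions (\ref{Fuji-d}) are not stable under taking square roots of the spectrum: for $d=2$ and eigenvalues $(\tfrac12,\tfrac12,0)$ one has $\sum_\beta\sqrt{\lambda_\beta}=\sqrt2>1=1+d\min_\beta\sqrt{\lambda_\beta}$, so $\Phi$ is not a channel and the Nathanson--Ruskai theorem cannot be applied to $\Phi^{\otimes n}$. (The paper's Proposition \ref{PP} has the same soft spot, but a blind proof should close it, not inherit it.) Second, your aside ``in the relevant regime $\lambda_{\max}\ge0$'' in Step 1 hides a real exception: under the hypothesis $|\lambda_{\max}|\ge|\lambda_{\min}|$, the case $\lambda_{\max}<0$ occurs precisely when all $\lambda_\alpha$ are equal and negative, which (\ref{Fuji-d}) permits down to $\lambda=-1/(d^2-1)$. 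There your inequality $(d-1)\lambda_{\max}+\lambda_{\min}\ge0$ fails, one has $\nu_\infty(\Lambda)=\tfrac1d(1-\lambda_{\min})>f_{\max}(\Lambda)$, and the Ernst--Klesse chain forces $f_{\max}^{(\infty)}(\Lambda)\ge\nu_\infty(\Lambda)>f_{\max}(\Lambda)$: for $d=2$ and $\lambda_1=\lambda_2=\lambda_3=-\tfrac13$ (i.e.\ $p_0=0$, $p_1=p_2=p_3=\tfrac13$) one gets $f_{\max}=\tfrac13$ but $\nu_\infty=\tfrac23$, so multiplicativity genuinely fails at this boundary point of the hypothesis. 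Any correct proof must exclude it explicitly (e.g.\ by assuming $\lambda_{\max}\ge0$), so your silent restriction marks a real counterexample to the statement as written rather than a harmless simplification.
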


\begin{Corollary}
If $|\lambda_{\max}|\leq|\lambda_{\min}|$, then $f_{\min}^{(n)}(\Lambda)$ is multiplicative, and therefore
\begin{equation}
f_{\min}^{(n)}(\Lambda)=f_{\min}(\Lambda),\qquad n=1,2,\ldots,\infty.
\end{equation}
\end{Corollary}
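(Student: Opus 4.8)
The plan is to mirror the proof of Corollary \ref{C1}, replacing the maximal output $\infty$-norm by its minimal counterpart. I would introduce
\[
\mu_\infty(\Lambda):=\min_{P,Q}\Tr(Q\Lambda[P])=\min_P\lambda_{\min}(\Lambda[P]),
\]
the smallest attainable output eigenvalue, which is the natural dual of $\nu_\infty$. First I would record two facts that need no hypothesis. Since minimizing over a larger set can only lower the value, $f_{\min}(\Lambda)=\min_P\Tr(P\Lambda[P])\ge\mu_\infty(\Lambda)$; and since a product of single-system minimizers is admissible for $\Lambda^{\bigotimes n}$, both quantities are submultiplicative, in particular $f_{\min}(\Lambda^{\bigotimes n})\le f_{\min}(\Lambda)^n$, so that $f_{\min}^{(n)}(\Lambda)\le f_{\min}(\Lambda)$.

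Next I would compute $\mu_\infty$ by dualizing the preceding Proposition: minimizing eq. (\ref{wynik}) under the constraint (\ref{warunek}) rather than maximizing it gives
\[
\mu_\infty(\Lambda)=\frac1d\min\{1-\lambda_{\max},\,1+(d-1)\lambda_{\min}\}.
\]
The hypothesis $|\lambda_{\max}|\le|\lambda_{\min}|$ places the dominant eigenvalue on the negative branch, $\lambda_{\min}\le0$ and $\lambda_{\max}\le-\lambda_{\min}$, whence $(d-1)\lambda_{\min}+\lambda_{\max}\le(d-2)\lambda_{\min}\le0$; the second branch is the smaller one and $\mu_\infty(\Lambda)=\frac1d[1+(d-1)\lambda_{\min}]=f_{\min}(\Lambda)$ by eq. (\ref{A1}). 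This is the exact dual of the identity $\nu_\infty(\Lambda)=f_{\max}(\Lambda)$ that drives Corollary \ref{C1}.

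The two facts produce the sandwich $f_{\min}(\Lambda)=\mu_\infty(\Lambda)$ from below and $f_{\min}^{(n)}(\Lambda)\le f_{\min}(\Lambda)$ from above. To close it I would invoke multiplicativity of $\mu_\infty$ for the generalized Pauli channels in the form $\mu_\infty(\Lambda^{\bigotimes n})\ge\mu_\infty(\Lambda)^n$, which yields
\[
f_{\min}^{(n)}(\Lambda)=\sqrt[n]{f_{\min}(\Lambda^{\bigotimes n})}\ge\sqrt[n]{\mu_\infty(\Lambda^{\bigotimes n})}\ge\mu_\infty(\Lambda)=f_{\min}(\Lambda).
\]
Hence $f_{\min}^{(n)}(\Lambda)=f_{\min}(\Lambda)$ for every finite $n$, and letting $n\to\infty$ settles $n=\infty$; as a by-product $f_{\min}(\Lambda^{\bigotimes n})=f_{\min}(\Lambda)^n$.

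The main obstacle is the supermultiplicative direction $\mu_\infty(\Lambda^{\bigotimes n})\ge\mu_\infty(\Lambda)^n$, i.e. the assertion that entangled inputs and measurements cannot push the minimal output overlap below the value attained on products of projectors onto the mutually unbiased bases. This is the precise dual of the hard (submultiplicative) direction for $\nu_\infty$, and it is exactly where the sign hypothesis is indispensable: for the \emph{minimum}, unlike the maximum, entanglement can genuinely lower the fidelity (already for a depolarizing channel with all $\lambda_\alpha>0$ equal, a maximally entangled input beats every product state), so the argument must use that $|\lambda_{\max}|\le|\lambda_{\min}|$ forces $\lambda_{\min}\le0$ to become dominant and pins the optimum onto product MUB projectors. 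I would establish this supermultiplicativity by the same strategy that yields multiplicativity of $\nu_\infty$ for these channels, with the role of the largest output eigenvalue taken over by the smallest, reducing to the Nathanson--Ruskai statement that the extremal output eigenvalue is reached on the mutually unbiased bases.
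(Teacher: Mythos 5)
Your sandwich skeleton is logically sound, but it rests on two claims you do not establish, and the decisive one is precisely the hard part. The supermultiplicativity $\mu_\infty(\Lambda^{\bigotimes n})\ge\mu_\infty(\Lambda)^n$ is only asserted, with an appeal to ``the same strategy'' as for $\nu_\infty$ -- but the analogy does not transfer. For the maximal output norm the trivial direction is $\nu_\infty(\Lambda\otimes\Phi)\ge\nu_\infty(\Lambda)\nu_\infty(\Phi)$ (product inputs), and all available machinery (Fukuda's reduction, the Nathanson--Ruskai results, and in this paper the chain culminating in Corollary \ref{C1}) addresses the nontrivial $\le$ direction of a \emph{maximum}; for your minimal quantity the trivial direction flips to $\le$, and there is no Nathanson--Ruskai statement about minimal output eigenvalues to reduce to -- their conjecture and theorems concern the maximal output $p$-norms only. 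The paper takes a different route entirely, never introducing an $\infty$-norm dual: the corollary is an immediate consequence of Proposition \ref{PP}(i), whose engine is the exact $2$-norm identity $f_{\max}(\Lambda^\dagger\Lambda)=\nu_2^2(\Lambda)$ of eq. (\ref{fmaxnu2}) combined with Nathanson--Ruskai's multiplicativity of $\nu_2(\Lambda\otimes\Phi)$ for arbitrary $\Phi$, with Remark \ref{RR2} tying $\nu_2$ to $\lambda_{\min}$ when $\lambda_{\min}^2\ge\lambda_{\max}^2$; iterating yields $f_{\min}(\Lambda^{\bigotimes n})=f_{\min}(\Lambda)^n$ and hence $f_{\min}^{(n)}=f_{\min}$ for all $n$ including the limit.

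There is also a concrete hole in your step 5: the inference ``$|\lambda_{\max}|\le|\lambda_{\min}|$ places the dominant eigenvalue on the negative branch, $\lambda_{\min}\le0$'' fails at the boundary, since the hypothesis also admits $\lambda_{\max}=\lambda_{\min}=\lambda>0$ (the depolarizing channel). There $\mu_\infty(\Lambda)=\frac{1}{d}(1-\lambda)<\frac{1}{d}\left[1+(d-1)\lambda\right]=f_{\min}(\Lambda)$ by eq. (\ref{A1}), so the identity $\mu_\infty=f_{\min}$ -- the lower jaw of your sandwich -- collapses. Worse, the example you cite in passing to motivate the sign hypothesis actually \emph{satisfies} that hypothesis (with equality) and violates the conclusion: for the depolarizing channel a maximally entangled input gives $\Tr\bigl(P(\Lambda\otimes\Lambda)[P]\bigr)=\lambda^2+(1-\lambda^2)/d^2<\bigl(\lambda+(1-\lambda)/d\bigr)^2=f_{\min}^2(\Lambda)$, so no argument can close your sandwich in that case -- this is really a boundary defect of the statement itself, which your route makes visible but does not circumvent. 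At best your scheme could work under the strictly stronger assumption $\lambda_{\min}\le0$ with $\lambda_{\max}<-\lambda_{\min}$ handled separately; even then the unproven supermultiplicativity of $\mu_\infty$ remains the missing centerpiece, whereas the paper's $\nu_2$-based mechanism bypasses $\mu_\infty$ altogether.
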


\begin{Corollary}\label{C3}
For $\lambda_{\max}\geq-\frac{1}{d-1}\lambda_{\min}$, the maximal output $\infty$-norm and the maximal channel fidelity coincide, $\nu_\infty(\Lambda)=f_{\max}(\Lambda)$. If additionally $|\lambda_{\max}|\geq|\lambda_{\min}|$, then $\nu_\infty(\Lambda)$ is multiplicative.
\end{Corollary}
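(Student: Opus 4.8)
The plan is to split the statement into its two assertions and reduce each to facts already in hand. For the first claim I would simply compare the two arguments of the maximum in the formula $\nu_\infty(\Lambda)=\frac1d\max\{1+(d-1)\lambda_{\max},\,1-\lambda_{\min}\}$ from the preceding Proposition against $f_{\max}(\Lambda)=\frac1d[1+(d-1)\lambda_{\max}]$ from Theorem \ref{THM}. These coincide exactly when the first argument dominates, that is when $1+(d-1)\lambda_{\max}\ge 1-\lambda_{\min}$; rearranging gives $(d-1)\lambda_{\max}\ge-\lambda_{\min}$, which is precisely the hypothesis $\lambda_{\max}\ge-\frac{1}{d-1}\lambda_{\min}$. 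Thus the first part is a one-line algebraic comparison.

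For the second claim, with both hypotheses now assumed, I would combine the chain $f_{\max}(\Lambda)\le\nu_\infty(\Lambda)\le\nu_\infty^{(\infty)}(\Lambda)=f_{\max}^{(\infty)}(\Lambda)$ recorded above with Corollary \ref{C1}. The extra hypothesis $|\lambda_{\max}|\ge|\lambda_{\min}|$ is exactly the hypothesis of Corollary \ref{C1}, so it supplies $f_{\max}^{(\infty)}(\Lambda)=f_{\max}(\Lambda)$, while the first part of the present statement supplies $\nu_\infty(\Lambda)=f_{\max}(\Lambda)$. Feeding these two equalities into the chain collapses every term to the common value $f_{\max}(\Lambda)$; in particular $\nu_\infty(\Lambda)=\nu_\infty^{(\infty)}(\Lambda)$.

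It then remains to upgrade this single asymptotic equality to genuine multiplicativity across tensor powers. For this I would exploit the super-multiplicativity $\nu_\infty(\Lambda\otimes\Phi)\ge\nu_\infty(\Lambda)\nu_\infty(\Phi)$ quoted earlier: taking $\Phi=\Lambda^{\bigotimes(n-1)}$ shows that $a_n:=\log\nu_\infty(\Lambda^{\bigotimes n})$ is super-additive, whence Fekete's lemma gives $\lim_n a_n/n=\sup_n a_n/n$ and therefore $a_n/n\le\log\nu_\infty^{(\infty)}(\Lambda)$ for every $n$, i.e. $\nu_\infty(\Lambda^{\bigotimes n})\le\nu_\infty^{(\infty)}(\Lambda)^n$. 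Since $\nu_\infty^{(\infty)}(\Lambda)=\nu_\infty(\Lambda)$ and super-multiplicativity already yields the reverse bound $\nu_\infty(\Lambda^{\bigotimes n})\ge\nu_\infty(\Lambda)^n$, the two estimates pinch together to give $\nu_\infty(\Lambda^{\bigotimes n})=\nu_\infty(\Lambda)^n$, which is the asserted multiplicativity. The routine parts are pure bookkeeping with inequalities already established; the one place demanding care is this last passage, where the matching upper bound $\nu_\infty(\Lambda^{\bigotimes n})\le\nu_\infty^{(\infty)}(\Lambda)^n$ must be extracted from super-additivity rather than simply assumed.
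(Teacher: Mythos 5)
Your proof is correct and follows the route the paper intends: the corollary is stated there without an explicit proof, as an immediate consequence of the $\nu_\infty$ formula in eq. (\ref{infnorm}) together with Theorem \ref{THM} and Corollary \ref{C1}, and your first two paragraphs reproduce exactly that comparison-of-maxima and chain-collapse argument. Your closing Fekete-lemma step simply makes explicit the pinching $\nu_\infty(\Lambda)^n\leq\nu_\infty(\Lambda^{\bigotimes n})\leq\nu_\infty^{(\infty)}(\Lambda)^n$ that the paper leaves implicit in asserting multiplicativity, which is a welcome (and correct) piece of diligence rather than a departure from its approach.
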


In the theory of open quantum systems, the evolution is given by dynamical maps $\Lambda(t)$ -- that is, the families of quantum channels parametrized by time $t\geq 0$. Assume that $\Lambda(t)$ is the solution of the master equation
\begin{equation}\label{me}
\dot{\Lambda}(t)=\mathcal{L}(t)\Lambda(t),\qquad \Lambda(0)=\oper,
\end{equation}
with the time-local generator $\mathcal{L}(t)$. An important property of such a dynamical map is that its eigenvalues are non-negative, $\lambda_\alpha(t)\geq 0$. It is easy to check that they satisfy the inequalities in Corollaries \ref{C1} and \ref{C3}. Therefore, the generalized Pauli channels being the solutions of eq. (\ref{me}) are good examples of the quantum channels for which the maximal channel fidelity and the maximal output $\infty$-norm are multiplicative.

\section{Conclusions}

We analyzed the channel fidelity for the generalized Pauli channels, which is the measure of distortion between input and output states. We found general analytical formulas for the minimal and maximal channel fidelity and showed that these quantities satisfy the multiplicativity conjecture for some classes of the generalized Pauli channels. Next, we focused our attention on the matter of purity of the output states, which was measured by the maximal output $p$-norm. For $p=\infty$, we derived the exact formula for the maximal output norm, and we also showed that it can be multiplicative. Finally, we analyzed the regularized maximal channel fidelity, which is technically simpler than the channel capacity problem. It turns out that our results lead to interesting implications in the theory of open quantum systems. Namely, if the generalized Pauli channel is generated by the time-local generator, then its maximal channel fidelity and maximal output $\infty$-norm are both multiplicative.

Many questions are still remaining unanswered. It would be interesting to determine whether the extremal channel fidelities and the maximal output $\infty$-norm are multiplicative for the whole spectrum of eigenvalues $\lambda_\alpha$. Also, the behaviour of the regularized channel fidelities for $n<\infty$ requires further studies.

\section*{Acknowledgements} This paper was supported by the National Science Centre project 2015/17/B/ST2/02026.

\end{document}